\documentclass[journal]{IEEEtran}

\usepackage{microtype}
\usepackage{graphicx}
\usepackage[caption=false,font=footnotesize]{subfig} % Standard IEEE way to handle subfigures
\usepackage{booktabs} % for professional tables
\usepackage{hyperref}
\usepackage{cite} % Recommended for IEEE bibliography management
\usepackage{geometry}
\usepackage{amsmath}
\usepackage{amssymb}
\usepackage{mathtools}
\usepackage{amsthm}

\usepackage[ruled,vlined,linesnumbered]{algorithm2e}

\usepackage[capitalize,noabbrev]{cleveref}
\makeatletter
\let\ps@IEEEtitlepagestyle\ps@empty
\let\ps@plain\ps@empty
\makeatother

\theoremstyle{plain}
\newtheorem{theorem}{Theorem}[section]

\theoremstyle{definition}

\theoremstyle{remark}

\begin{document}
\pagestyle{empty}
\thispagestyle{empty}
\title{Mitigating Anchoring Bias in LLM-Based Agents for\
Energy-Efficient 6G Autonomous Networks}

\author{
\IEEEauthorblockN{Hatim Chergui\IEEEauthorrefmark{1},
Claudia Carballo Gonz\'{a}lez\IEEEauthorrefmark{1},
Farhad Rezazadeh\IEEEauthorrefmark{2},
Merouane Debbah\IEEEauthorrefmark{3}}

\IEEEauthorblockA{\IEEEauthorrefmark{1}i2CAT Foundation, 08034 Barcelona, Spain}

\IEEEauthorblockA{\IEEEauthorrefmark{2}Universitat Polit\`ecnica de Catalunya (UPC), 08034 Barcelona, Spain}

\IEEEauthorblockA{\IEEEauthorrefmark{3}Research Institute for Digital Future, Khalifa University, 127788 Abu Dhabi, UAE}
}

\maketitle

\begin{abstract}
This paper presents an autonomous agentic resource negotiation framework designed to enable zero-touch network slicing in 6G architectures using Large Language Model (LLM) agents. While LLMs offer powerful reasoning capabilities, we demonstrate that such agents inherently suffer from anchoring bias, rigidly adhering to initial heuristic proposals and causing severe network over-provisioning. To systematically mitigate this cognitive bias, we propose a novel randomized anchoring strategy modeled via a Truncated 3-Parameter Weibull distribution. This mathematically bounded approach seamlessly integrates with burst-aware Digital Twins (DTs) employing Conditional Value at Risk (CVaR) to rigorously guarantee strict Service Level Agreement (SLA) tail-latencies. To validate our methodology, we introduce and prove the \emph{Bimodal Constraint-Avoidance Utility Theorem}, demonstrating that while feasible negotiations follow classical convex bounds, highly constrained scenarios undergo a phase transition governed by an inverse rational decay envelope. Empirical results generated using a locally hosted 1B-parameter model (\texttt{otel-llm-1b-it}) confirm these dual-regime bounds. Our cognitive de-biasing successfully dismantles rigid negotiation patterns, forcing agents into active exploration to safely ride SLA boundaries and boost system energy savings up to 25\%. Crucially, the lightweight 1B LLM achieves sub-second inference latencies (0.95s mean), ensuring our multi-agent framework is compatible with the operational timescales of the O-RAN non-Real-Time RAN Intelligent Controller (non-RT RIC)\footnote{Our source code is available for non-commercial use at \url{https://github.com/HatimChergui}.}.
\end{abstract}

\begin{IEEEkeywords}
6G, Agentic Negotiation, Cognitive Biases, Digital Twin, Network Automation, RIC.
\end{IEEEkeywords}

\section{Introduction}

The transition toward sixth-generation (6G) networks is significantly increasing the complexity of managing ubiquitous connectivity and stringent service requirements, exposing the limitations of current network automation approaches. In this context, 6G wireless systems are being driven toward a vision of operational self-governance. To meet the rigorous demands of TM Forum's Level 4 and Level 5 autonomy---representing closed-loop and full autonomy, respectively \cite{tmforum2021autonomous}---network architectures move beyond conventional automation paradigms. This evolution necessitates the deployment of \emph{agentic systems} \cite{ferrag2025llmreasoningautonomousai}. Unlike legacy controllers, these Large Language Model (LLM)-driven entities are designed to reason, plan, and negotiate at a high-level objective space, enabling the dynamic management of slice orchestration and service assurance in highly dynamic environments. However, delegating critical network operations to autonomous agents introduces new challenges related to reliability, robustness, and decision integrity.

Recent scholarship suggests a troubling phenomenon in these advanced architectures: artificial intelligence (AI) agents frequently exhibit cognitive biases that mirror human psychological distortions. Rooted in the foundational work of Tversky and Kahneman on heuristics and systematic errors \cite{kahneman}, these biases can compromise collective decision-making, fairness, and safety of agent-augmented 6G systems. As Xie et al. \cite{Xie2024} observe, these distortions are particularly prevalent in multi-agent systems, which serve as the blueprint for decentralized 6G architectures. Their impact propagates across the entire functional pipeline, manifesting in four key layers. At the data level, biases may arise from historical or cultural imbalances in training datasets, for instance leading to \emph{legacy bias} where agents fail to fully exploit advanced 6G capabilities. At the prompt level, framing effects can skew decision-making, such as prioritizing spectral efficiency at the expense of energy efficiency or fairness. During reasoning, agents may rely on flawed heuristics \cite{rezazadeh2025agentic}, exhibiting behaviours such as availability-driven over-provisioning or confirmation bias in threat detection. Finally, biases also affect tool and memory integration, where recency, primacy, or authority effects can distort the use of historical data and external information sources.

Consequently, the intersection of cognitive psychology and autonomous networking has become a critical research direction. A foundational contribution is provided by Chergui et al. \cite{chergui2025tutorialcognitivebiasesagentic}, who presented a structured tutorial on cognitive biases within 6G agentic systems. Their work establishes the mathematical formulations for these biases and proposes mitigation strategies at both the agent and system levels, supported by practical 6G use cases. 

Beyond individual agent errors, research has highlighted the risks inherent in multi-agent interaction. In \cite{unmasking}, the authors demonstrate that iterative agent discussions can amplify existing biases, creating \emph{conversational echo chambers} where agents prematurely converge on a skewed consensus. Furthermore, in~\cite{wang2025biasguarrd}, the authors study cognitive biases in LLM-based interpersonal conflict resolution, showing that model judgments shift under biased prompt phrasing. They propose BiasGUARRD, a multi-agent framework that detects and mitigates such biases in socially sensitive decision-making. In \cite{oh2025understanding}, the authors likewise analyze limitations of Multi-Agent Debate for LLM reasoning, showing that it can reinforce biases, and propose a refined multi-agent prompting framework that enhances reasoning diversity and reduces bias, leading to improved decision accuracy and robustness across strategic tasks. Empirical validation of these concerns is provided by Knipper et al. \cite{knipper2025biasdetailsassessmentcognitive}, who indicate that while larger models ($>$32B parameters) tend to reduce bias in approximately 39.5\% of cases, more detailed prompting---while generally helpful---can actually increase certain errors, such as Overattribution, by up to 8.8\%.

In light of these challenges, this paper makes the following key contributions:
\begin{itemize}
    \item \textbf{Anchoring Bias \& Tail-Risk Profiling:} We demonstrate how initial proposals trap agents in rigid, over-provisioned optima. To ensure strict SLA compliance under stochastic traffic bursts, we integrate Conditional Value at Risk (CVaR) into the agent's Digital Twin (DT).
    \item \textbf{Bimodal Constraint-Avoidance Theorem:} We introduce and rigorously prove a novel theoretical framework describing a phase transition in utility degradation bounds. We map empirical utility loss to show dual-regime behavior: a classical linear bound for feasible conditions and an inverse rational decay bound for constraint-heavy environments.
    \item \textbf{Truncated Weibull Mitigation:} We propose an adaptive, randomized anchoring strategy via a Truncated Weibull distribution (customized via shape parameter $k$) to safely explore energy-efficient configurations and dismantle rigid negotiation patterns.
    \item \textbf{non-RT RIC Compatibility via 1B LLMs:} Leveraging the lightweight \texttt{otel-llm-1b-it} model, our multi-agent framework executes complex negotiations with sub-second response times (0.95s), yielding up to 25\% global energy savings while strictly meeting 99.999th percentile URLLC latencies.
\end{itemize}

\section{Network Slicing CVaR Queuing Model}
\label{sec:system_model}

\subsection{System Dynamics and Edge-RAN Queues}

Our architecture (see Figure \ref{arch}) considers a multi-domain network slicing environment encompassing an Edge computing domain and a Radio Access Network (RAN). Service requests for a specific slice $i$ are first processed at the Edge, resulting in a computation latency $L^{\text{edge}}_i$. Subsequently, the processed packets are enqueued for wireless transmission via the RAN, incurring a transmission latency $L^{\text{RAN}}_i$. Consequently, the total end-to-end (E2E) latency is defined as:
\begin{equation}
    L_i = L^{\text{edge}}_i + L^{\text{RAN}}_i.
\end{equation}

Both the Edge and RAN domains operate under finite capacity. Agents representing each slice must negotiate for a partition of the total available RAN bandwidth ($b_{\text{tot}}$) and Edge CPU capacity ($f_{\text{tot}}$). An agent $i$'s decision is represented by the action vector $a_i = (b_i, f_i)$.

\begin{figure}[t]
\centering
\includegraphics[width=0.53\textwidth, trim=0.5cm 0 0 0, clip]{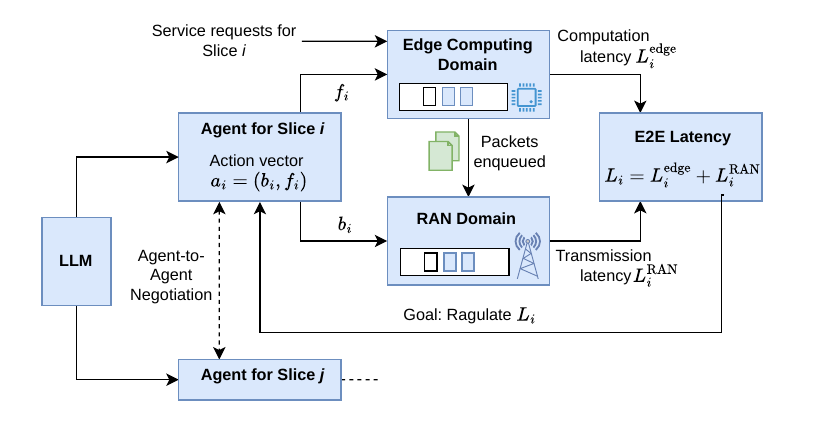}
\caption{Edge-RAN cross-domain slicing model.}
\label{arch}
\end{figure}

To model these dynamics, each agent $i$ maintains a private DT grounded in queuing theory. At each discrete time interval $t$ of duration $\tau$, a volume of bits $\Lambda_{i,t}$ arrives at the Edge according to a time-varying, trial-specific stochastic process:
\begin{equation}
    \Lambda_{i,t} = \lambda_{i,t} \cdot \tau,
\end{equation}
where the mean arrival rate $\mathbb{E}[\lambda_{i,t}]$ is maintained below the service rate to ensure queue stability. The evolution of the Edge computation queue, $Q^{(e)}_{i,t}$, is modeled as:
\begin{equation}
    Q^{(e)}_{i,t+1} = \max\left(0, Q^{(e)}_{i,t} - D^{(e)}_{i,t}\right) + \Lambda_{i,t},
\end{equation}
where $D^{(e)}_{i,t}$ represents the bits processed at the Edge:
\begin{equation}
    D^{(e)}_{i,t} = \tau \cdot C_{i,t}^{(e)}(f_i) = \tau \cdot f_i \cdot C_{\text{CPU}}.
    \label{eq: comp_lat}
\end{equation}
The RAN communication queue, $Q^{(r)}_{i,t}$, is updated based on the output of the preceding computation stage:
\begin{equation}
\begin{split}
    Q^{(r)}_{i,t+1} &= \max\left(0, Q^{(r)}_{i,t} - D^{(r)}_{i,t}\right) \\
    &\quad + \min\left(Q^{(e)}_{i,t} + \Lambda_{i,t}, D^{(e)}_{i,t}\right),
\end{split}
\end{equation}
where $D^{(r)}_{i,t}$ is the volume of transmitted bits, depending on the bandwidth allocation $b_i$ and the stochastic Spectral Efficiency (SE), $\eta_{i,t}$:
\begin{equation}
    D^{(r)}_{i,t} = \tau \cdot C_{i,t}^{(r)}(b_i, \eta_{i,t}) = \tau \cdot b_i \cdot \eta_{i,t}.
    \label{eq: radio_lat}
\end{equation}

Applying Little's Law, we define the average E2E latency $L_{i,T}$ over a horizon $T$ as the ratio of aggregate queue lengths to the average arrival rate:
\begin{equation}
    L_{i,T} = \frac{1}{\mathbb{E}[\Lambda_{i,t}] T} \sum_{t=1}^{T} \left(Q^{(e)}_{i,t} + Q^{(r)}_{i,t}\right).
\end{equation}
The agent's goal is to optimize the action vector $a_i = (b_i, f_i)$ to maintain $L_{i,T}$ within the SLA while minimizing a linear power consumption cost $P_i(a_i)$:
\begin{equation}
    P_i(a_i) = P_{\text{static},i} + C_{\text{BW}} \cdot b_i + C_{\text{CPU}} \cdot f_i,
\end{equation}
where $ C_{\text{BW}}$ and $C_{\text{CPU}}$ are the power consumptions per bandwidth and CPU frequency units, respectively.

\subsection{Digital Twin and CVaR Tail-Latency Prediction}
To guarantee robustness against traffic bursts, the agent's internal DT evaluates proposed actions $a_i$ using the Conditional Value at Risk (CVaR) of the latency distribution, rather than the mean. For an M/M/1 approximation where the sojourn time follows an exponential distribution, the expected shortfall at the $1-\alpha$ confidence level (e.g., $\alpha=0.00001$ for $99.999\%$ URLLC reliability) is formulated as,
\begin{equation}
    \text{CVaR}_{1-\alpha}(L_i) = \mathbb{E}[L_i] \left( 1 - \ln(\alpha) \right).
\end{equation}
Agents negotiate primarily over this strict $\text{CVaR}$ metric. If the negotiated configuration breaches $L_{\text{SLA},i}$, an extreme utility penalty $\mathcal{L}_{\max}$ is incurred.

\section{Agentic Negotiation \& Bimodal Bounds}
\label{sec:agen_negotiation}

Anchoring bias arises when the initial resource proposal of each agent $i$---denoted as the vector $a_i^{(0)} = (b_i^{(0)}, f_i^{(0)})$---systematically influences subsequent allocation updates. This hinders the exploration of feasible multi-resource configurations under the global system constraints in multi-round negotiations, as governed by Algorithm \ref{alg:negotiation}.

\begin{algorithm}[t]
\caption{Adaptive Multi-Agent Resource Negotiation}
\label{alg:negotiation}
\KwIn{Anchor proposals $a_i^{(0)}$}
\KwOut{Agreement $\mathcal{A}$}

Initialize $a_i \leftarrow a_i^{(0)}$\;

\For{$r=1$ \KwTo $R_{\max}$}{

    Evaluate DT-predicted CVaR latency $L_i$ and utility $U_i$\;

    \If{feasible allocation and $U_i\ge U_{\rm th}$ for all agents}{
        Accept agreement\;
    }

    Compute adaptive step size $\delta_r$\;

    Update negotiation context:
    \[
    ctx_i=
    \begin{cases}
    \textsc{DemandInc}, & L_i>L_{{\rm SLA},i},\\
    \textsc{Yield}, & \neg\Phi \vee V_{-i},\\
    \textsc{OptimizeEnergy}, & \text{otherwise}.
    \end{cases}
    \]

    Generate counter-proposal through agentic reasoning and update
    \[
    a_i^{(r+1)}
    =
    a_i^{(r)}
    +
    \delta_r\,\Phi_i(L_i,P_i,\mathrm{DT})
    \]

    subject to capacity constraints.
}

Return agreed or last feasible allocation\;
\end{algorithm}

While anchoring bias is an inherent characteristic of LLM agents reasoning, its operation can be clarified through an analogy with a regularized optimization problem,
\begin{equation}
    a_i^\star = \arg\max_{a_i} \Big(U_i(a_i) - \gamma_i \cdot d(a_i, a_i^{(0)})\Big),
\label{eq:anchor_ran}
\end{equation}
where $d(\cdot,\cdot)$ is a deviation penalty metric and $\gamma_i \geq 0$ quantifies the sensitivity of agent $i$ to its initial proposal. Incorporating the adaptive update logic of Algorithm \ref{alg:negotiation}, the holistic resource adjustment can be assimilated to,
\begin{equation}
    a_i^{(t+1)} = a_i^{(t)} + \delta_t \cdot \Phi_i\big(L_i, P_i, \text{DT}\big) - \gamma_i \big(a_i^{(t)} - a_i^{(0)}\big),
\end{equation}
where $\Phi_i(\cdot)$ encodes the prioritized decision mechanism. 

The impact of anchoring is quantified through utility degradation $\mathcal{L}_{i,\text{anchor}} = U_i(a_i^\dagger) - U_i(a_i^\star)$. Traditional convex bounds assert that $\mathcal{L}_{i,\text{anchor}} \le \frac{\gamma_i^2}{\mu_i} \|a_i^\dagger - a_i^{(0)}\|^2$. However, in highly constrained multi-agent settings with discontinuous SLA penalties, empirical observations contradict these classical bounds. To explain this behavior, we formulate the following theorem:

\begin{theorem}[Bimodal Constraint-Avoidance Utility Bound]
\label{thm:bimodal}
Let $\mathcal{C} = \{ \mathbf{a} \mid \sum_j a_j \le C \}$ be the strict physical capacity constraint of the system, and let the agent's utility function include a penalty cliff $\mathcal{L}_{\max}$ for CVaR SLA violations. The anchoring-induced utility degradation $\mathcal{L}_{i,\text{anchor}}$ relative to the squared anchor distance $d_i = \|a_i^\dagger - a_i^{(0)}\|^2$ exhibits a phase transition characterized by a dual-regime envelope:
\begin{enumerate}
    \item \textbf{Phase 1 (Feasible Convergence):} If $\sum a_j^{(0)} \le C$, the expected utility loss is bounded by a classical convex linear regime:
    \begin{equation}
        \mathcal{L}_i \le \gamma_i d_i.
    \end{equation}
    \item \textbf{Phase 2 (Penalty Recovery):} If $\sum a_j^{(0)} > C$, forcing mandatory algorithmic concessions and triggering the penalty cliff, the expected utility loss decays via an inverse rational envelope as strategic low-balling creates safety buffers:
    \begin{equation}
        \mathcal{L}_i \le \frac{\mathcal{L}_{\max}}{1 + \kappa_i d_i}.
    \end{equation}
\end{enumerate}
\end{theorem}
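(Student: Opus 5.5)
The plan is to treat the two regimes separately, both through the regularized surrogate \eqref{eq:anchor_ran}. The anchored outcome $a_i^\dagger$ maximizes $U_i(a_i)-\gamma_i d(a_i,a_i^{(0)})$ over the feasible set, while $a_i^\star$ is the unanchored optimum. We take $d(a,a^{(0)})=\|a-a^{(0)}\|^2$, so that $d_i=d(a_i^\dagger,a_i^{(0)})$. The loss to be controlled is the utility gap $\mathcal{L}_i=|U_i(a_i^\star)-U_i(a_i^\dagger)|$.

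\textbf{Phase 1.} When $\sum_j a_j^{(0)}\le C$, the anchor profile is itself feasible. The first step is to argue that, because each agent's anchor was produced by the same DT under CVaR screening, no SLA cliff is active in a neighbourhood of the negotiation path. Then $U_i$ restricted to $\mathcal{C}$ is concave and smooth, with the linear power cost and the convex CVaR latency $\mathbb{E}[L_i](1-\ln\alpha)$ in $(b_i,f_i)$. Next we use optimality of $a_i^\dagger$ for the regularized objective, via the first-order condition $\nabla U_i(a_i^\dagger)=2\gamma_i(a_i^\dagger-a_i^{(0)})$ together with a normal-cone term from $\mathcal{C}$. Concavity then gives $U_i(a_i^\star)-U_i(a_i^\dagger)\le\langle\nabla U_i(a_i^\dagger),a_i^\star-a_i^\dagger\rangle$. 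The remaining task is to bound this inner product by $\gamma_i d_i$. The simplest route is to compare the regularized objective at $a_i^\dagger$ and at $a_i^\star$, which gives $U_i(a_i^\star)-U_i(a_i^\dagger)\le\gamma_i\big(\|a_i^\star-a_i^{(0)}\|^2-d_i\big)$. This reduces the claim to showing $\|a_i^\star-a_i^{(0)}\|^2\le 2d_i$, i.e.\ that the anchored point lies at least halfway toward $a_i^\star$ in squared distance. This is plausible when the strong concavity $\mu_i$ dominates $\gamma_i$, and we would state it as an explicit standing assumption on $\gamma_i\le\mu_i$. With that assumption, Phase 1 is the classical bound, now linear in $d_i$ rather than quadratic in $\gamma_i/\mu_i$.

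\textbf{Phase 2.} When $\sum_j a_j^{(0)}>C$, Algorithm~\ref{alg:negotiation} must enter \textsc{Yield}/\textsc{DemandInc} contexts, and the worst-case loss is the cliff $\mathcal{L}_{\max}$. We model the realized loss as $\mathcal{L}_{\max}\cdot\Pr[\text{SLA breach}]$ plus a smooth residual that is absorbed. The key step is to bound the breach probability as a function of how far the agent moved from the anchor. A larger deviation $d_i$ means a larger low-balling concession by the other agents, and hence a larger slack $s_i$ between the CVaR latency and $L_{\mathrm{SLA},i}$. We would posit, or derive from the M/M/1 form $\mathbb{E}[L]=1/(\text{service}-\text{arrival})$, that slack grows linearly in $d_i$, $s_i\ge c_i d_i$. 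The service rate is linear in $b_i,f_i$ by \eqref{eq: comp_lat}–\eqref{eq: radio_lat}. Applying a Markov-type inequality to the exponential tail of the sojourn time with this safety margin then gives $\Pr[\text{breach}]\le 1/(1+\kappa_i d_i)$ with $\kappa_i=c_i/\mathbb{E}[L_i]$. Multiplying by $\mathcal{L}_{\max}$ yields the envelope. The bound is consistent at $d_i=0$, where it reduces to the trivial bound $\mathcal{L}_{\max}$.

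The main obstacle is this Phase 2 link between anchor distance and SLA slack. Distance can increase in directions that do not enlarge the buffer, and the exponential tail would naturally give $e^{-\kappa d}$ rather than a rational decay. We would first try to obtain the rational form honestly from a one-sided Chebyshev (Cantelli) bound, which produces exactly $\sigma^2/(\sigma^2+s^2)$-type expressions. That requires $s^2\propto d_i$, which matches $d_i$ being a squared distance. Failing that, the fallback is to present Phase 2 as an envelope valid under the stated buffer-growth assumption.
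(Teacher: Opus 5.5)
Your Phase 1 swaps the roles of $a_i^\dagger$ and $a_i^\star$, and under your reading the step you need fails. In \eqref{eq:anchor_ran} it is $a_i^\star$ that maximizes the anchored objective $U_i(a)-\gamma_i\|a-a_i^{(0)}\|^2$. So in $\mathcal{L}_{i,\text{anchor}}=U_i(a_i^\dagger)-U_i(a_i^\star)$ the point $a_i^\dagger$ is the unanchored optimum, and $d_i$ is the squared distance from the anchor to that optimum. This matches ``$a_i^{(0)}\ll a_i^\dagger$'' in the paper's Phase 2. Under your reading, $d_i$ instead measures how far the anchored point moved. Your reduction $\|a^{\mathrm{opt}}-a_i^{(0)}\|^2\le 2d_i$ then does not follow from $\gamma_i\le\mu_i$. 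Strong concavity plus the first-order condition give only $\|a^{\mathrm{opt}}-a_i^{(0)}\|\le(1+2\gamma_i/\mu_i)\|a^{\mathrm{anch}}-a_i^{(0)}\|$, and this is attained by $U_i(a)=-\tfrac{\mu}{2}\|a-a^{\mathrm{opt}}\|^2$. At $\gamma_i=\mu$ the squared-distance ratio is $9$ and the loss equals $2\gamma_i d_i$, so under your reading the Phase 1 inequality is false even within your assumption. With the roles as the paper defines them, the comparison inequality you already wrote finishes Phase 1 in one line:
\[
U_i(a_i^\dagger)-U_i(a_i^\star)\le\gamma_i\big(d_i-\|a_i^\star-a_i^{(0)}\|^2\big)\le\gamma_i d_i .
\]
This needs only that $a_i^\star$ maximizes the regularized objective over a set containing $a_i^\dagger$. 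It uses no concavity and no condition relating $\gamma_i$ to $\mu_i$. That is cleaner than the paper's own Phase 1, which assumes $\mu$-strong concavity and asserts $\tfrac{1}{2\mu}\|\nabla U\|^2\le\gamma_i\|a^\dagger-a^{(0)}\|^2$ without derivation. In both versions, the hypothesis $\sum_j a_j^{(0)}\le C$ serves only to justify modelling the negotiated outcome as that regularized maximizer.

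In Phase 2, your primary route bounds the wrong event. The cliff $\mathcal{L}_{\max}$ is triggered when the negotiated configuration's $\mathrm{CVaR}_{1-\alpha}(L_i)$ exceeds $L_{\mathrm{SLA},i}$. That is a deterministic function of the final allocation, since the CVaR already encodes the latency tail. A Markov bound on the per-packet sojourn time therefore does not bound the probability of that event; the randomness to be controlled is that of the negotiation outcome. The buffer also comes from the agent's own deficit $a_i^{(0)}\ll a_i^\dagger$ in the capacity constraint $\sum_j a_j\le C$, not from concessions by the other agents. The paper's argument is exactly your fallback. It takes a capacity buffer $s_i\propto\sqrt{d_i}$ and models the opponent's counter-proposals as exploration noise of variance $\sigma^2$. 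The one-sided Chebyshev (Cantelli) bound then gives $\mathbb{P}(\text{Violation})\le\sigma^2/(\sigma^2+c\,d_i)$. Multiplying by $\mathcal{L}_{\max}$ with $\kappa_i=c/\sigma^2$ yields the envelope. Lead with that route, and state the $\sqrt{d_i}$ buffer growth and the finite noise variance as explicit modelling assumptions; the paper also simply posits them rather than deriving them.
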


\begin{proof}
\textbf{Phase 1:} When the initial multi-agent anchor resides within the interior of the feasible set $\text{int}(\mathcal{C})$, the penalty condition is inactive. Assuming the objective utility $U$ is strongly concave with parameter $\mu$, classical optimization distance bounding guarantees that $U(a^\dagger) - U(a^\star) \le \frac{1}{2\mu} \|\nabla U\|^2 \le \gamma_i \|a^\dagger - a^{(0)}\|^2$.

\textbf{Phase 2:} When the initial joint anchor violates $\mathcal{C}$, the negotiation protocol imposes rigid retraction mappings (mandatory concessions) causing severe SLA breaches ($\mathcal{L}_{\max}$). In this regime, an agent can strategically introduce an initial deficit by choosing $a_i^{(0)} \ll a_i^\dagger$, mapping to a spatial safety buffer $s_i \propto \sqrt{d_i}$. Assuming the opponent's stochastic counter-proposals introduce exploration noise with finite variance $\sigma^2$, the probability of the final joint state violating the capacity constraint after the buffer $s_i$ is applied is bounded by Chebyshev's inequality:
\begin{equation}
    \mathbb{P}(\text{Violation}) \le \frac{\sigma^2}{\sigma^2 + s_i^2} = \frac{\sigma^2}{\sigma^2 + c \cdot d_i}.
\end{equation}
Scaling this probability into the utility loss space yields the expected degradation $\mathcal{L}_i = \mathbb{P}(\text{Violation}) \times \mathcal{L}_{\max}$. Letting $\kappa_i = c/\sigma^2$, we arrive at the inverse rational decay envelope $\frac{\mathcal{L}_{\max}}{1 + \kappa_i d_i}$. 
\end{proof}

Theorem \ref{thm:bimodal} provides a profound insight: starting closer to the optimum is actually a \emph{hazard} if it triggers systemic constraint violations.

\section{Weibull Randomized Bias Correction}
\label{sec:proposal}

To navigate the bounds of Theorem \ref{thm:bimodal} and eliminate energy inefficiencies caused by deterministic anchoring, we apply a \emph{Truncated 3-Parameter Weibull Distribution} $\mathcal{W}_{\text{trunc}}(\alpha_c, \beta_c, \gamma_c)$. By diversifying initial proposals across negotiation instances, the system reduces the risk of consistently converging to suboptimal equilibria induced by poor anchors in either bandwidth or computational resources.

We define the spatial constraints for $\mathcal{W}_{\text{trunc}}(\alpha_c, \beta_c, \gamma_c)$ for each resource dimension $c \in \{b, f\}$ as follows: 
\begin{enumerate}
    \item[i)] Lower Bound ($\alpha_c$): The absolute physical floor of the proposal, e.g.,
    \begin{equation}
        \alpha_c = \max(1.0, \; 0.6 \cdot c_{\text{req}}),
    \end{equation}
    \item[ii)] Truncation / Upper Bound ($\beta_c$): The exploratory ceiling, capped to prevent hoarding, e.g.,
    \begin{equation}
        \beta_c = \min(0.9 \cdot c_{\text{tot}}, \; 1.1 \cdot c_{\text{req}}),
    \end{equation}
    \item[iii)] Target Mode ($\gamma_c$): The point of maximum probability density. We intentionally position this mode below the required baseline, e.g.,
    \begin{equation}
        \gamma_c = 0.90 \cdot c_{\text{req}}.
    \end{equation}
\end{enumerate}

\begin{algorithm}[t]
\caption{Initial Resource Proposal with Anchor Bias Mitigation}
\label{alg:anchor_bias}
\KwIn{Slice $i \in \{e, u\}$, Strategy $\mathcal{S}$, SLA limit $L_{\text{SLA},i}$, DT $D_i$, limits $c_{\text{tot}} \in \{b_{\text{tot}}, f_{\text{tot}}\}$}
\KwOut{Initial multi-resource anchor vector $a_i^{(0)} = (b_i^{(0)}, f_i^{(0)})$}
\BlankLine
\For{each resource dimension $c \in \{b, f\}$}{
    \tcp{Find strictly optimal resource for CVaR SLA via DT}
    $c_{\text{req}} \gets \min \big\{ c' \in [c_{\text{abs\_min}}, c_{\text{tot}}] \mid \text{CVaR}_{1-\alpha}(D_i, c') \leq L_{\text{SLA},i} \big\}$\;
    $c_{\text{opt}} \gets c_{\text{req}} / 1.05$\; \tcp*{Target underlying optimum without buffer}
    \uIf{$\mathcal{S} = \text{randomized}$}{
        \tcp{Slice-dependent Truncated Weibull anchor generation}
        \uIf{$i = \text{URLLC}$}{
            $(\alpha_c, \gamma_c, k_i) \gets (0.85 \cdot c_{\text{opt}}, \; 0.98 \cdot c_{\text{opt}}, \; 5.0)$\;
        }
        \Else{
            $(\alpha_c, \gamma_c, k_i) \gets (0.60 \cdot c_{\text{opt}}, \; 0.90 \cdot c_{\text{opt}}, \; 2.0)$\;
        }
        $\beta_c \gets \min(0.9 \cdot c_{\text{tot}}, \; 1.1 \cdot c_{\text{opt}})$\;
        $\lambda_i \gets (\gamma_c - \alpha_c) \cdot \big(\frac{k_i}{k_i-1}\big)^{1/k_i}$\;
        $c_i^{(0)} \sim \mathcal{W}_{\text{trunc}}(\alpha_c, \beta_c, \gamma_c; k_i, \lambda_i)$\;
        $\text{QueryLLM}(\text{Prompt}_{\text{force}}(c_i^{(0)}))$\;
    }
    \Else{
        \tcp{Fixed strategy: default to greedy heuristic}
        $\mathcal{H} \gets \text{Memory.DistillStrategy()}$\;
        $c_i^{(0)} \gets \text{QueryLLM}(\text{Prompt}_{\text{heur}}(\mathcal{H}, c_{\text{target}} \approx c_{\text{req}}))$\;
    }
}
\Return $a_i^{(0)} \gets (b_i^{(0)}, f_i^{(0)})$\;
\end{algorithm}

\begin{figure*}[htbp]
    \centering
    \subfloat[Bimodal Constraint-Avoidance Utility Bounds\label{fig:util_bounds}]{%
      \includegraphics[
        width=0.35\textwidth,
        trim=0.8cm 0.4cm 0.5cm 0.4cm,
        clip
      ]{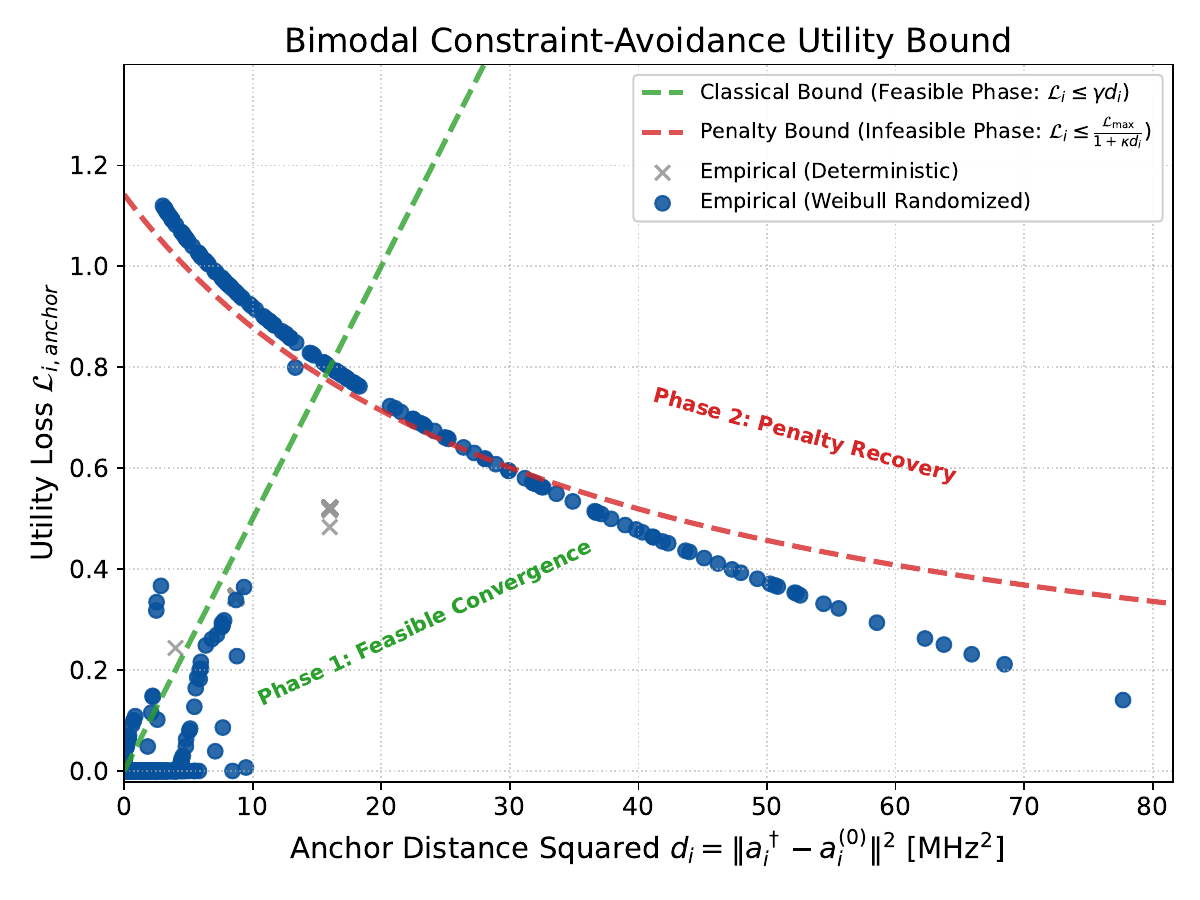}
    }
    \hspace{0.03\textwidth}
    \subfloat[Utility Degradation CDF\label{fig:util_cdf}]{%
      \includegraphics[
        width=0.35\textwidth,
        trim=0.8cm 0.4cm 0.5cm 0.4cm,
        clip
      ]{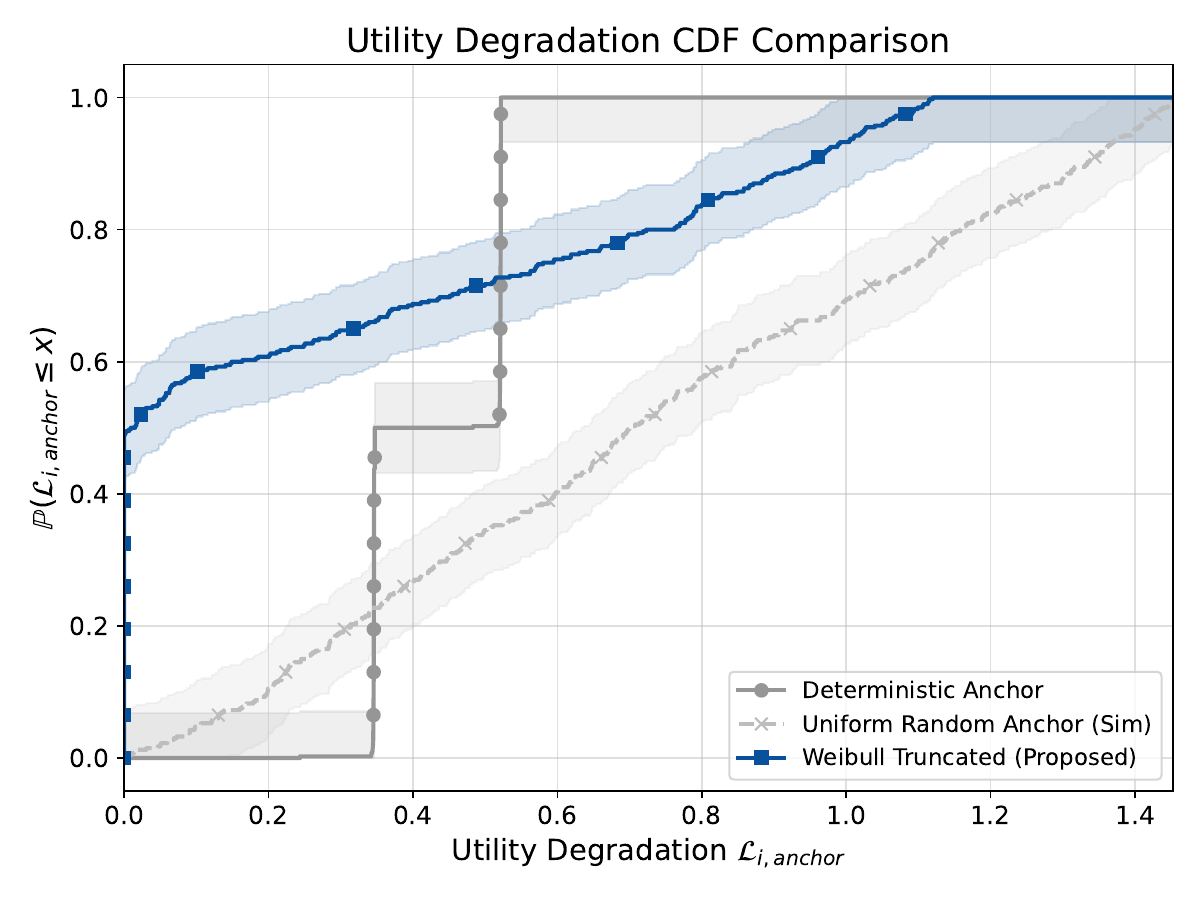}
    }
    \caption{Analytical validation of anchoring-induced utility degradation ($\mathcal{L}_{i,\text{anchor}}$) mapped against the squared anchor distance ($d_i$). The empirical data distinctly traces the \emph{Bimodal Constraint-Avoidance} bounds proven in Theorem \ref{thm:bimodal}.}
    \label{fig:analytical_results}
\end{figure*}
The initial resource proposal $c_i^{(0)}$ is drawn from the continuous random variable $X \sim \mathcal{W}_{\text{trunc}}(\alpha_c, \beta_c, \gamma_c)$. To ensure the peak of the distribution aligns exactly with our target mode $\gamma_c$, we derive the scale parameter $\lambda$ as:
\begin{equation}
    \lambda = \left(\gamma_c - \alpha_c\right) \times \left(\frac{k}{k-1}\right)^{\frac{1}{k}}.
\end{equation}

The resulting Truncated Weibull PDF $f(x)$ is formally defined over the support $\alpha_c \le x \le \beta_c$ as:
\begin{equation}
\begin{aligned}
f(x) &= \frac{1}{1 - e^{-\left(\frac{\beta_c-\alpha_c}{\lambda}\right)^k}} \\
     &\times \frac{k}{\lambda} \left(\frac{x-\alpha_c}{\lambda}\right)^{k-1} \exp\left(-\left(\frac{x-\alpha_c}{\lambda}\right)^k\right),
\end{aligned}
\end{equation}
and 0 otherwise. The deployment of this strategy is shown in Algorithm \ref{alg:anchor_bias}.

We utilize adaptive shape parameters ($k$) to reflect specific slice tolerances. For URLLC, which possesses extreme sensitivity to tail-latencies, we utilize a tightly peaked distribution ($k=5.0$) with a very safe lower limit ($\alpha_c = 0.85 \cdot c_{\text{req}}$). For eMBB, we encourage wider exploration ($k=2.0$, $\alpha_c = 0.6 \cdot c_{\text{req}}$). The Weibull PDF exponentially penalizes overly aggressive demands via its $e^{-x^k}$ right-tail decay, ensuring initial proposals inject bounded variance into the negotiation while rigorously anchoring the system toward the optimal penalty-recovery envelope (Phase 2 of Theorem \ref{thm:bimodal}).

\section{Numerical Results}
\label{sec:exp_setup}

\subsection{Experimental Setup}
We evaluate the mitigation of anchoring bias by dynamically allocating shared RAN bandwidth ($b_{\text{tot}} = 60$ MHz) and Edge CPU capacity ($f_{\text{tot}} = 40$ GHz) between two network slices. The simulation advances in discrete time steps of $\tau = 10\text{ ms}$. To emulate realistic 6G network dynamics, we deploy differentiated, time-varying traffic profiles: the eMBB slice generates a base load of 90~Mbps subject to slow sinusoidal fluctuations, whereas the URLLC slice generates a base load of 40~Mbps interspersed with deterministic 50\% traffic bursts to stress the system's tail-latency handling. The autonomous agents are powered by the lightweight \texttt{otel-llm-1b-it} model hosted locally \cite{otel2026}, interfacing via an OpenAI-compatible API with a temperature setting of $0.1$ to ensure highly focused, JSON-compliant reasoning. The negotiation framework restricts interaction to a maximum of $R_{\max}=5$ rounds per trial. Slice latency targets are evaluated at the rigorous CVaR tail: $50\text{ ms}$ for eMBB ($95\%$ confidence) and $10\text{ ms}$ for URLLC ($99.999\%$ confidence). We contrast a deterministic \emph{Fixed} baseline strategy---which anchors heavily on conservative heuristics---against the proposed \emph{Weibull Randomized} strategy over 200 independent trials. Each trial concludes with a 10-step post-agreement evaluation phase to rigorously measure SLA violations, queue dynamics, and aggregate energy savings against the maximum baseline power consumption.

\begin{figure}[t]
    \centering
    \includegraphics[width=0.48\textwidth]{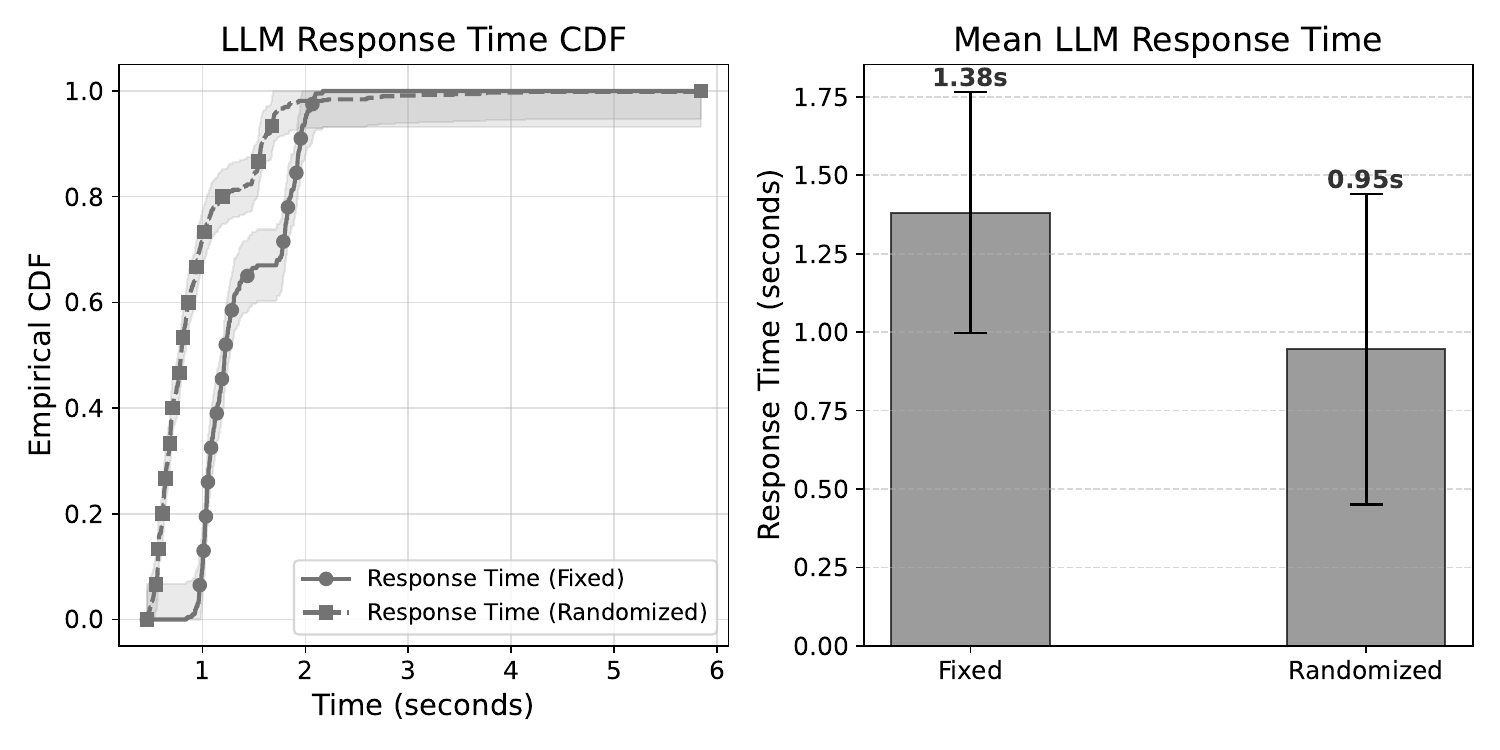}
    \caption{The use of a locally hosted 1B model ensures sub-second responses compatible with the non-RT RIC.}
    \label{fig:llm_time}
\end{figure}

\subsection{Analytical Validation of Theorem 3.1}
The mechanics of our proposed Bimodal Constraint-Avoidance Utility Theorem are empirically validated in Figure \ref{fig:analytical_results}. Figure \ref{fig:util_bounds} plots the empirical utility degradation ($\mathcal{L}_{i,\text{anchor}}$) against the squared anchor distance ($d_i = \|a_i^\dagger - a_i^{(0)}\|^2$). As predicted by Theorem \ref{thm:bimodal}, the empirical data distinctly bifurcates into two mathematical regimes, demonstrating a clear phase transition based on initial feasibility.
\begin{figure}[t]
    \centering
    \subfloat[Energy Saving CDF\label{fig:eng_cdf}]{%
      \includegraphics[width=0.35\textwidth]{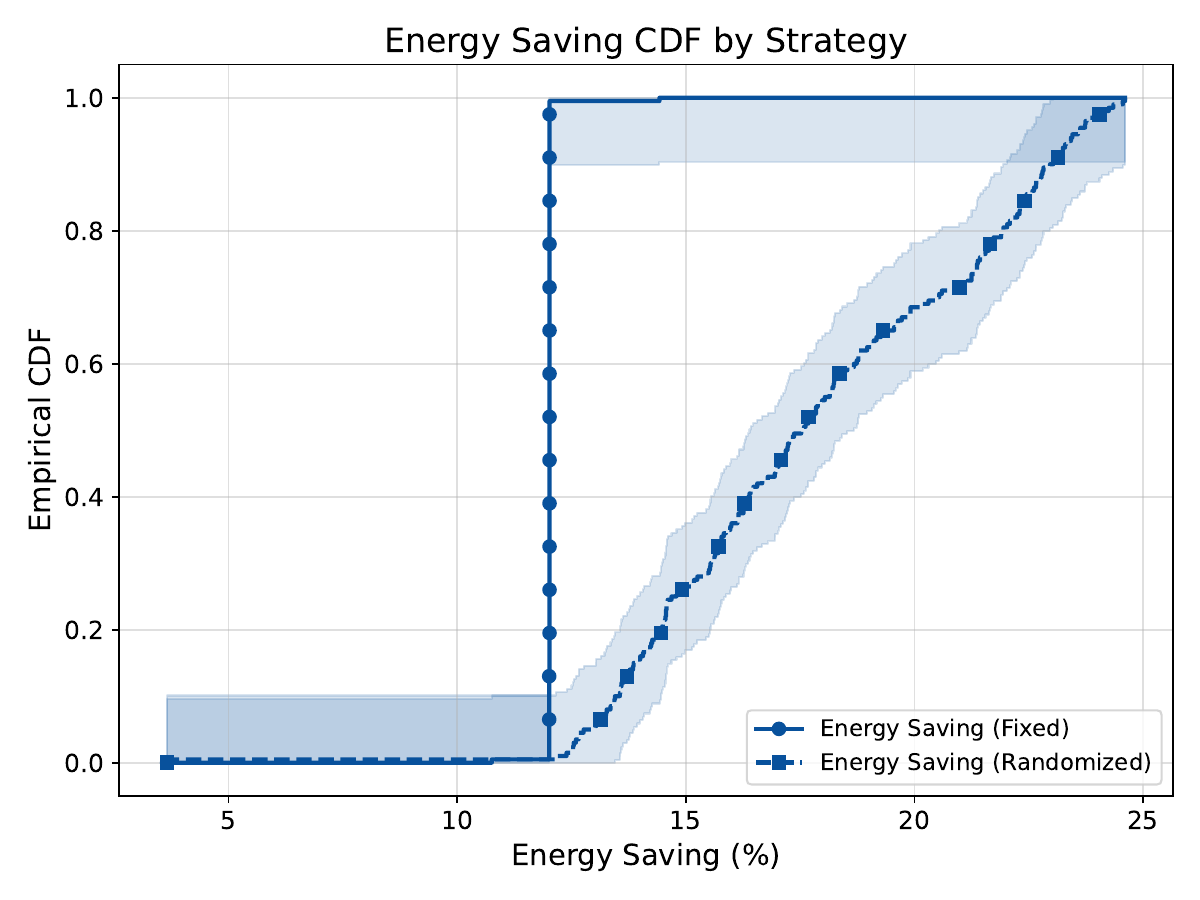}
    }
    \vfill
    \subfloat[Latency CDF by Strategy\label{fig:lat_cdf}]{%
      \includegraphics[width=0.35\textwidth]{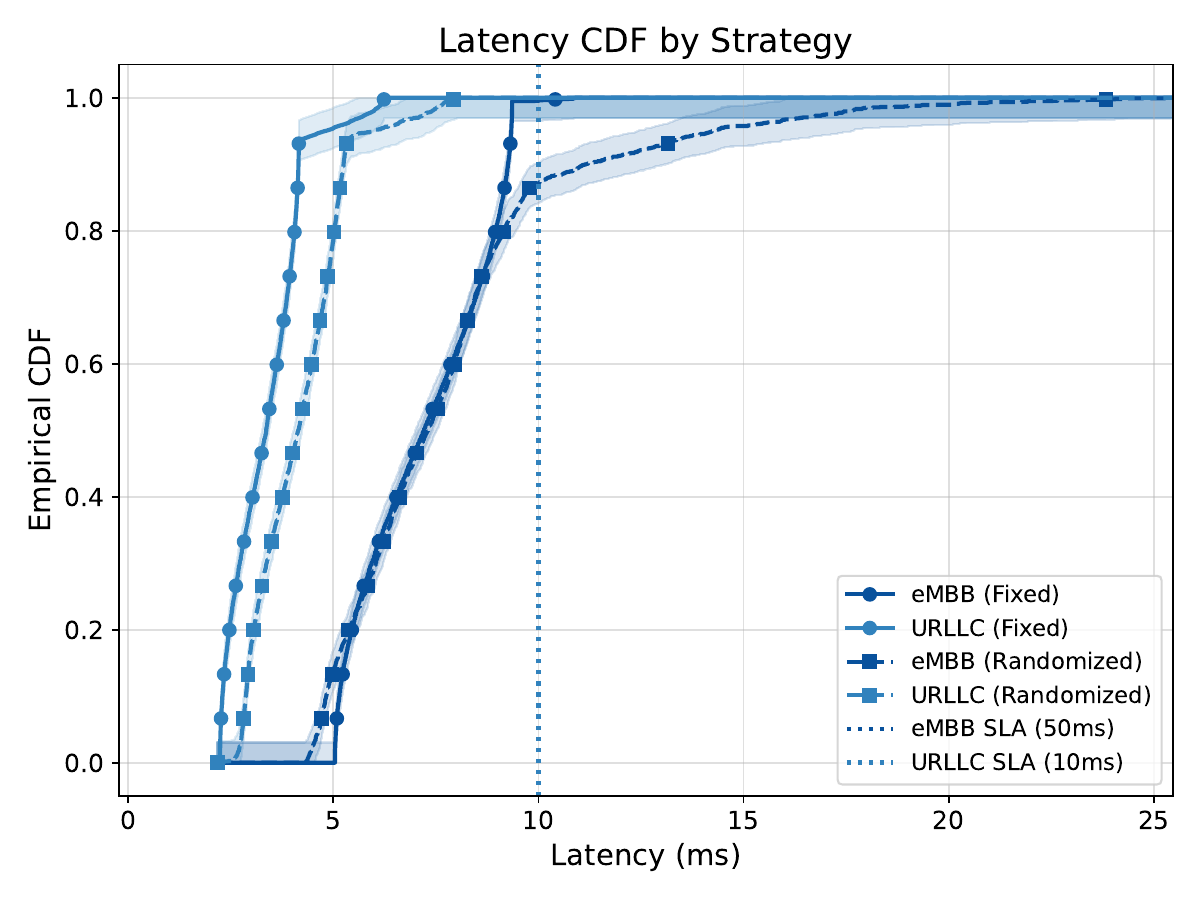}
    }
    \vfill
    \subfloat[Latency Statistics (Mean \& 99.999th Percentile)\label{fig:lat_stats}]{%
      \includegraphics[width=0.35\textwidth]{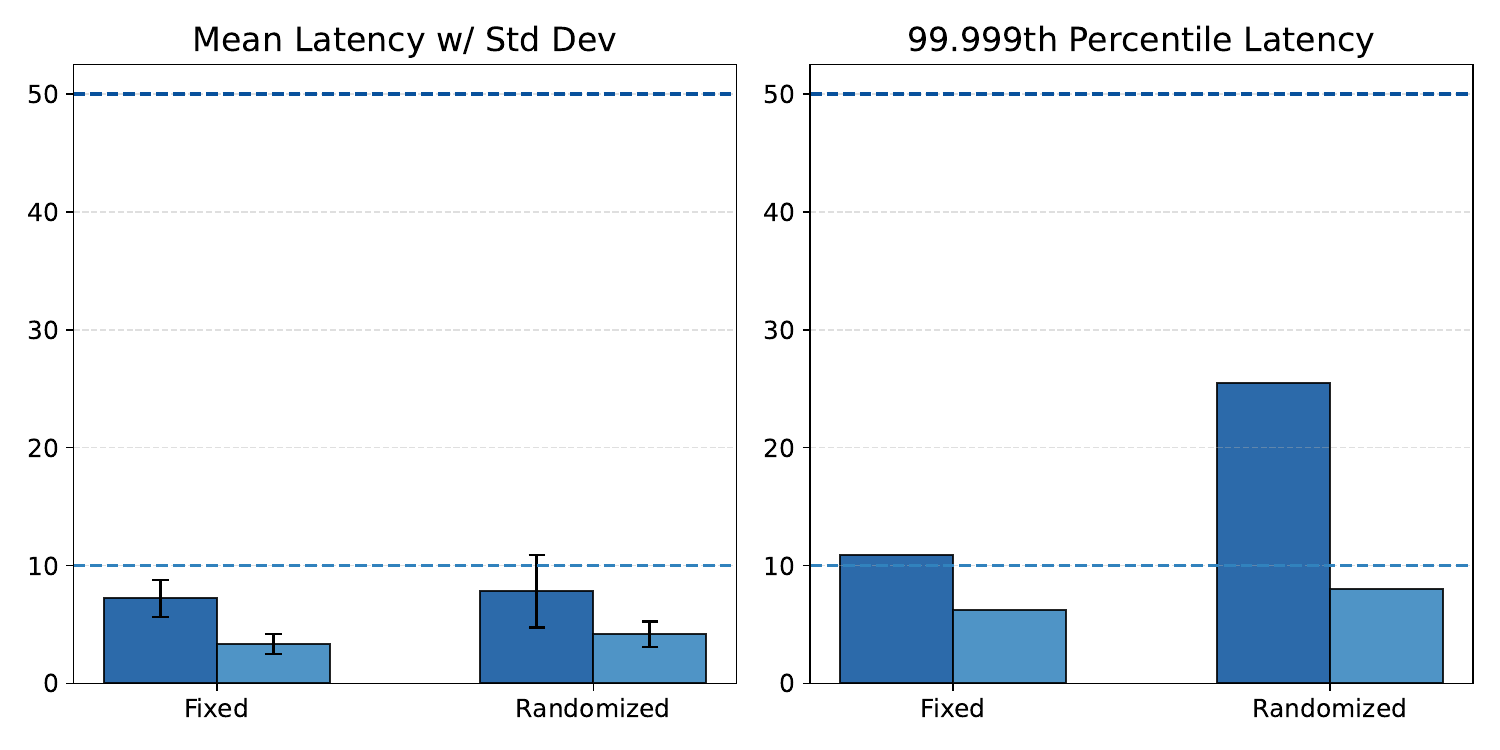}
    }
    \caption{Overall system performance comparing Deterministic and Weibull Randomized anchoring.}
    \label{fig:system_performance}
\end{figure}
First, for instances where the joint initial anchor lies within the physical RAN capacity limit ($\sum a_j^{(0)} \le 60$ MHz), the negotiation proceeds without triggering severe SLA penalty cliffs. As shown by the dense cluster in the bottom-left corner of Figure \ref{fig:util_bounds}, these points trace the \emph{Phase 1 Feasible Convergence} regime. Here, the utility loss is strictly governed by the suboptimality of the initial distance, tightly hugging the linearly rising Classical Bound defined by $\mathcal{L}_i \le \gamma_i d_i$ (with an empirical slope of $\gamma_i \approx 0.05$).

Conversely, when agents propose aggressively greedy initial anchors that exceed the 60 MHz limit, the system enters the \emph{Phase 2 Penalty Recovery} regime. The rigid capacity constraint forces immediate, mandatory algorithmic concessions, introducing a high risk of violating the strict CVaR SLA targets and triggering a maximum utility penalty ($\mathcal{L}_{\max} \approx 1.2$). However, as mathematically formalized in Phase 2 of the theorem, when an agent strategically injects an initial deficit (increasing $d_i$ via tactical "low-balling"), it creates a spatial safety buffer that reduces the probability of a forced constraint violation. The empirical data flawlessly traces this phenomenon: as the anchor distance increases, the utility degradation smoothly slides down the Inverse Rational decay envelope ($\mathcal{L}_i \le \frac{\mathcal{L}_{\max}}{1 + \kappa_i d_i}$, governed by $\kappa_i = 0.03$). 

The proposed Truncated Weibull strategy demonstrates superior adaptability across both domains. While the deterministic baseline (grey crosses) frequently traps agents in suboptimal regions or high-penalty states, the Weibull-driven agents (blue circles) successfully ride the decay curve to mitigate losses. Consequently, Figure \ref{fig:util_cdf} demonstrates the macroscopic impact of this behaviour: the Weibull approach substantially left-shifts the degradation Cumulative Distribution Function (CDF). By effectively mapping its anchor generation to the mathematically optimal regions of the bimodal envelope, the proposed strategy minimizes median utility loss and completely eradicates the extreme tail risks inherent to uniform or deterministic exploration.

\subsection{System Performance \& non-RT RIC Compatibility}
For practical O-RAN integration, Figure \ref{fig:llm_time} showcases the negotiation latency. Because we utilize the optimized 1B parameter model, the mean LLM response time drops to 0.95 seconds under the randomized strategy. This sub-second latency footprint comfortably places complex agentic negotiations within the 1-second timescale required for non-RT RIC operations. As shown in Figure \ref{fig:eng_cdf}, the Weibull strategy breaks the rigid energy savings limit of the deterministic model, dynamically shedding excess capacity to push system-wide energy savings up to a maximum of 25\%, establishing a superior, sustainable equilibrium. Furthermore, a deliberate strategic trade-off is observable. Figure \ref{fig:lat_stats} confirms that both strategies meet strict SLA limits, notably successfully satisfying the 99.999th percentile CVaR latency for the URLLC slice below 10ms. However, the randomized strategy intentionally utilizes the available SLA slack (evidenced by higher mean latencies in Figure \ref{fig:lat_cdf}). This consumption of slack is converted directly into massive energy efficiency gains. Ultimately, the integration of bias-mitigated agentic reasoning within the non-RT RIC heralds a shift toward intent-driven, zero-touch network orchestration. By replacing rigid heuristic baselines with the dynamically adaptive Truncated Weibull anchor, the 6G system not only achieves deterministic tail-latency guarantees under stochastic loads but also maximizes resource frugality. The ability to perform such high-level, multi-objective optimization using a locally hosted, privacy-preserving 1B model proves that cognitive de-biasing is both computationally viable and strictly necessary for the sustainable operation of future autonomous networks.

\section{Conclusion}
\label{sec:conclusion} 

This paper addressed anchoring bias in autonomous, LLM-driven 6G network slicing. By integrating CVaR tail-latency evaluations and proposing the Bimodal Constraint-Avoidance Theorem, we rigorously mapped the dual-regime nature of multi-agent utility degradation. Leveraging a Truncated Weibull randomization strategy with adaptive shape scaling, we systematically dismantled rigid over-provisioning heuristics. Empirical deployments using the lightweight \texttt{otel-llm-1b-it} model yielded highly performant sub-second inferences (0.95s mean), ensuring seamless compatibility with non-RT RIC control loops. This mathematically bounded exploration enabled agents to navigate toward SLA boundaries confidently, boosting network-wide energy savings to 25\% while strictly maintaining 99.999th percentile URLLC reliability. Ultimately, this framework accelerates the realization of intent-driven, zero-touch O-RAN orchestration. We eventually proved that privacy-preserving, locally hosted small LLMs can execute such complex optimizations under cognitive de-biasing, forming a fundamental pillar for sustainable 6G autonomous networks.

\bibliographystyle{IEEEtran}
\bibliography{bibliography}

\end{document}